\newtheorem*{theorem*}{Theorem}
\newcommand{\I}{\mathcal{I}}
\newcommand{\V}{\mathcal{V}}
\newcommand{\Dom}{\gamma}
\newcommand{\ConDom}{\Dom_c}
\newcommand{\Cover}{\theta}
\newcommand{\NeoCol}{\Cover_c}
\newcommand{\Eternal}{\Dom^{\infty}}
\newcommand{\oEternal}{\Eternal_{1,1}}
\newcommand{\mEternal}{\Eternal_{n,1}}
\newcommand{\mmEternal}{\Eternal_{n,n}}
\newcommand{\Weight}{\omega}
\newcommand{\Ind}{\alpha}
\begin{document}

\title{The eternal dominating set problem for interval graphs}

\author{Mart\'\i n Rinemberg\thanks{Departamento de Computación, FCEN, Universidad de Buenos Aires, Buenos Aires, Argentina.} \and Francisco J.\ Soulignac\thanks{CONICET and Departamento de Ciencia y Tecnología, Universidad Nacional de Quilmes, Bernal, Argentina.}}

\date{\normalsize\texttt{mrinemberg@dc.uba.ar}, \texttt{francisco.soulignac@unq.edu.ar}}

\maketitle

\begin{abstract}
We prove that, in games in which all the guards move at the same turn, the eternal domination and the clique-connected cover numbers coincide for interval graphs.  A linear algorithm for the eternal dominating set problem is obtained as a by-product.
\end{abstract}

Consider a game, defined by parameters $x, y \in \mathbb{N}$, that is played by an attacker and a defender on a graph $G$ with $n$ vertices.  Initially, the defender places $k \leq n$ guards on the vertices of $G$, leaving at most $y$ guards on each vertex.  At each turn, the attacker first \emph{attacks} a vertex $v$ of $G$, and the defender then moves at most $x$ guards.  Each moved guard is transferred to a neighbor of its current vertex, leaving at most $y$ guards on a same vertex.  The attack is \emph{repelled} if a guard occupies $v$ after the defender's move.  The attacker wins the game if one of its attacks is not repelled, whereas the defender wins if it is able to eternally repel the attacks.  In the latter case, the initial multiset of occupied vertices is an \emph{$(x,y)$-eternal dominating set}.  The \emph{$(x,y)$-eternal domination number} $\Eternal_{x,y}(G)$ of $G$ is the minimum $k$ such that the defender wins the game.  In the following, we refer to the game defined by parameters $x,y$ as the \emph{$(x,y)$-game}.

Let $V(G)$ be the vertex set of $G$ and $G[V]$ be the subgraph of $G$ induced by $V \subseteq V(G)$.  A set $D \subseteq V(G)$ is \emph{connected} when $G[D]$ is connected and \emph{dominating} if each vertex in $V(G) \setminus D$ has a neighbor in $D$.  The (\emph{connected}) \emph{domination number} $\Dom(G)$ ($\ConDom(G)$) is the minimum $k$ such that $G$ has a (connected) dominating set with $k$ vertices.  Clearly, the attacker wins an $(n,n)$-game when the initial set of occupied vertices is not dominating.  Conversely, a non-trivial connected graph $G$ can be defended in an $(n,1)$-game by placing a ``stationary'' guard at each vertex of a connected dominating set $D$ and a ``rover'' guard at a vertex $w \not\in D$.  In each turn, the defender virtually translates the rover to the attacked vertex $v$ by moving all the guards in a shortest path from $w$ to $v$ whose interior belongs to $D$. Hence, $\Dom(G) \leq \mmEternal(G) \leq \mEternal(G) \leq 1+\ConDom(G)$.

Similar bounds for $(1,1)$-games follow by considering the independence and clique cover numbers.  A \emph{clique} (an \emph{independent set}) of $G$ is a set of pairwise adjacent (non-adjacent) vertices of $G$, while a \emph{clique cover} of $G$ is a partition of $V(G)$ into cliques.  The \emph{independence number} $\Ind(G)$ of $G$ is the maximum such that $G$ has an independent set of size $\alpha(G)$, while the \emph{clique cover number} $\Cover(G)$ of $G$ is the minimum such that $G$ has a clique cover with $\Cover(G)$ parts.  If a different vertex of an independent set is attacked at each turn, then different guards are required to repel the attacks.  Conversely, every attack can be repelled if one guard defends each clique of a clique cover.  Thus, $\alpha(G) \leq \oEternal(G) \leq \Cover(G)$.  

As noted by Goddard et al.~\cite{GoddardHedetniemiHedetniemiJCMCC2005}, both upper bounds can be strengthened in $(n,1)$-games.  Define the \emph{weight} of a connected set $V$ as $\Weight(V) = 1$ if $V$ is a clique and $\Weight(V) = 1+\ConDom(G[V])$ otherwise.  A \emph{neocolonization} of $G$ is a partition $\V$ of $V(G)$ into connected sets; its \emph{weight} is $\Weight(\V) = \sum_{V \in \V} \Weight(V)$.  The \emph{clique-connected cover number} $\NeoCol(G)$ of $G$ is the minimum such that $G$ admits a neocolonization of weight $\NeoCol(G)$.  By the previous discussion, $\mEternal(G) \leq \NeoCol(G) \leq \min\{\Cover(G), \ConDom(G)+1\}$.

Together with $\mEternal(G) \leq \Ind(G)$, the above are some of the \emph{elementary} bounds that were discovered since the eternal domination problems were introduced in~\cite{BurgerCockayneGruendlinghMynhardtVuurenWinterbachJCMCC2004} and~\cite{GoddardHedetniemiHedetniemiJCMCC2005}; see~\cite{KlostermeyerMynhardtAADM2016} for an up-to-date review.  A nice feature about these inequalities is that they are easy to prove: lower bounds follow from simple greedy attack sequences, while upper bounds are obtained by partitioning $G$ into easy-to-defend subgraphs.  None of the inequalities in the chains $\Dom(G) \leq \mmEternal(G) \leq \mEternal(G) \leq \Ind(G) \leq \oEternal(G) \leq \Cover(G)$ and $\mEternal(G) \leq \NeoCol(G) \leq \Cover(G)$ holds by equality for all graphs (see~\cite{KlostermeyerMynhardtAADM2016}).  Yet, equality holds for certain graph classes, e.g., $\Ind(G) = \oEternal(G) = \Cover(G)$ when $G$ is perfect (c.f.\ above) and $\mEternal(G) = \NeoCol(G)$ when $G$ is a tree~\cite{KlostermeyerMacGillivrayJCMCC2009}.  In a recent article, Braga et al.~\cite{BragaSouzaLeeIPL2015} show that $\mmEternal(G) = \Cover(G)$ for proper interval graphs.  In this note we generalize their result by proving that $\mmEternal(G) = \NeoCol(G)$ for all interval graphs.  Whereas Braga et al.\ derive non-trivial lower bounds of $\mmEternal(G)$ for general graphs, we obtain a short proof, similar on spirit to those of the elementary bounds, by restricting our attention to interval graphs. 

\begin{figure}
\centering
\begin{tikzpicture}[ultra thick,scale=.45,yscale=.65]
    \def\Intervals{%
         0/ 3/1/A_1=D_1/3.7,%
         2/ 5/2//,%
         6/ 7/2/A_2=D_2/5,%
         8/ 9/2/A_3/3.7,%
         4/11/1/D_3/5,%
        12/13/1/A_4/3.7,%
        10/15/2/D_4/5,%
        14/17/1//,%
        16/21/2//,%
        18/19/1/A_5=D_5/3.7,%
        22/23/2/A_6=D_6/5,%
        24/25/2/A_7/3.7,%
        20/27/1/D_7/5,%
        26/29/2//,%
        28/31/1/A_8=D_8/3.7%
    }
    
    \foreach \s/\t/\l/\n/\p in \Intervals {
        \draw (\s,\l) to (\t, \l);
        \if\n\relax\else\draw [thin,gray] (\t,\l) to (\t, \p) node [above,black] {$\n$};\fi
    }
    
    \draw [thin,dashed] (-.5,0) to (31.5,0);
    \draw [thin,dashed] (-.5,3) to (31.5,3);
    \foreach \x in {3.5,15.5,19.5,27.5} {
      \draw [thin,dashed] (\x,0) to (\x, 3);        
    }
    
    \foreach \i/\p in {1/1.5,2/9.5,3/17.5,4/23.5,5/29.5} {
      \node at (\p,-1) {$\I_\i$};
    }
\end{tikzpicture}

\begin{tikzpicture}[yscale=.6,Vertex/.style={circle,thick,draw,fill=black,text=white,inner sep=0pt,minimum size=4pt}]
 \foreach \i\l in {1/v(A_1),2/,3/v(D_3),4/v(D_4),5/,6/,7/v(D_7),8/,9/v(D_8)} {
    \node [Vertex] (\i) at (1.5*\i,0) [label=above:$\l$]{};
 }
 \foreach \i in {1,...,8} {
    \pgfmathsetmacro\n{\i+1}%
    \draw [thick] (\i) to (\n); 
 }
 \foreach \i/\d\/\l in {3/-1/v(A_2),3/1/,4/0/v(A_4),6/0/v(A_5),7/-1/,7/1/v(A_7)} {
   \node [Vertex] (b\i\d) at (1.5*\i+.4*\d,-1) [label=below:$\l$] {};
   \draw [thick] (\i) to (b\i\d);
 }
\end{tikzpicture}

\caption{An interval model $\I$ (above) of an interval graph $G$ (below).  The labels for the intervals correspond to those in the proof of the Theorem.}\label{fig}
\end{figure}
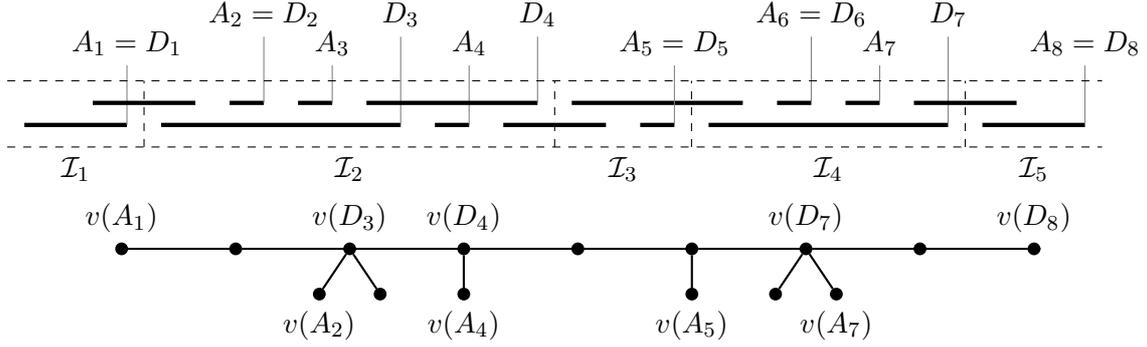

A graph $G$ is an \emph{interval graph} when each $v \in V(G)$ can be mapped into an interval $I(v)$ of the real line in such a way that $v$ and $w$ are adjacent in $G$ if and only if $I(v) \cap I(w) \neq\emptyset$.  The family $\I = \{I(v) \mid v \in V(G)\}$ is an \emph{interval model} of $G$ (Figure~\ref{fig}).  If no interval of $\I$ contains another interval of $\I$, then $\I$ is a \emph{proper interval model} and $G$ is a \emph{proper interval graph}.  Let $v(I)$ denote the vertex of $G$ corresponding to $I \in \I$, and $v(\I') = \{v(I) \mid I \in \I'\}$ for $\I' \subseteq \I$.  We write $s(I) = s$ and $t(I) = t$ to denote the \emph{beginning} and \emph{ending} points of $I = (s,t)$.    

\begin{theorem*}
 If $G$ is an interval graph, then $\mmEternal(G) = \mEternal(G) = \NeoCol(G)$.
\end{theorem*}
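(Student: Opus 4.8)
\medskip
\noindent\textbf{Plan of proof.}
The inequalities $\mmEternal(G) \leq \mEternal(G) \leq \NeoCol(G)$ were recorded in the discussion above, so the whole content is the reverse inequality; and since a part of a neocolonization, being connected, lies inside one component of $G$, while a guard can never leave the component in which it starts, all three parameters are additive over connected components. Hence we may assume that $G$ is connected, and we fix an interval model $\I$ of $G$ with pairwise distinct endpoints. The plan is to produce one explicit neocolonization $\V^{*}$ of $G$ together with an attacker strategy forcing at least $\Weight(\V^{*})$ guards in the $(n,n)$-game. This is enough: together with $\mmEternal(G) \leq \mEternal(G) \leq \NeoCol(G) \leq \Weight(\V^{*})$ it collapses all four quantities to a common value, and shows in passing that $\V^{*}$ is a minimum neocolonization (computable in linear time by the sweep below).

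The neocolonization $\V^{*} = \{V_{1}, \dots, V_{r}\}$ is obtained from a single left-to-right sweep of $\I$. The sweep marks the \emph{thin points} of the model --- points of the line met by exactly one interval whose removal separates the intervals entirely to its left from those entirely to its right --- and runs the classical farthest-reach greedy for connected domination on each stretch between consecutive thin points. Each $V_{i}$ is a set of vertices filling a contiguous stretch of the model: a \emph{clique block} of weight $1$ when the stretch induces a clique, and otherwise a \emph{spine block}, for which the greedy outputs a canonical connected dominating set $d^{i}_{1}, \dots, d^{i}_{c_{i}}$ of $G[V_{i}]$ with $c_{i} = \ConDom(G[V_{i}])$ and $\Weight(V_{i}) = 1 + c_{i}$. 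In each block I select an independent \emph{attack set} $S_{i} \subseteq V_{i}$ with $|S_{i}| = \Weight(V_{i})$: one vertex from a clique block, and, from a spine block, one ``pendant-like'' witness inside each of the $c_{i}+1$ gaps cut out by $d^{i}_{1}, \dots, d^{i}_{c_{i}}$, the two outermost gaps included --- such witnesses necessarily exist, because otherwise that stretch would admit a clique cover cheaper than $1+c_{i}$ and the greedy would not have opened a spine block there. Listing $S := S_{1} \cup \dots \cup S_{r} = \{u_{1}, \dots, u_{m}\}$ with $s(I(u_{1})) < \dots < s(I(u_{m}))$, one verifies that $S$ is independent, that consecutive $u_{j}$ lie at distance at least $2$ in $G$, and that every thin point separating two consecutive blocks falls strictly between the two witnesses straddling it; here $m = \Weight(\V^{*})$.

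The attacker now attacks $u_{1}, \dots, u_{m}, u_{1}, \dots$ in this cyclic order forever, and I claim that a defender with fewer than $m$ guards loses. If not, the play is eventually periodic, and during one period the defender occupies $u_{j}$ right after the $j$-th attack, for every $j$. It suffices to show that the guard doing so is distinct for distinct $j$. A guard that answers $u_{j}$ is, one turn earlier, within distance $1$ of $u_{j}$; by the distance-$\geq 2$ property it cannot then also be within distance $1$ of $u_{j+1}$, so consecutive witnesses require different guards; and the thin-point property, together with the fact that each attack grants the defender exactly one move, prevents a guard from ``looping around the period'' to serve two witnesses lying on opposite sides of a thin point without abandoning, at some turn, a witness it is responsible for. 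Summing over one period yields $m$ distinct guards, a contradiction, so $\mmEternal(G)\ge m=\Weight(\V^{*})$.

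The crux --- the one step that is not \emph{elementary} in the sense of the bounds recalled above --- is this last counting argument. In an $(n,n)$-game every guard may move on every turn, so a single ``hub'' guard can in principle repel attacks at many vertices (this is exactly what lets a star be defended while its entire maximum independent set is attacked), and the whole purpose of the greedy sweep is to manufacture an attack set of size only $\NeoCol(G)$, not $\Ind(G)$, that is geometrically dispersed across the thin points, so that no such hubbing is possible. Making ``the guard responsible for $u_{j}$'' precise and bookkeeping it through a period, and verifying the sweep invariants that guarantee the witnesses exist and sit on the correct sides of the thin points, is where the real work lies; the outer inequalities $\mmEternal(G) \leq \mEternal(G) \leq \NeoCol(G)$ are already available, and $\NeoCol(G) \leq \Weight(\V^{*})$ is immediate.
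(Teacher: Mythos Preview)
Your outline matches the paper's in spirit: run one greedy sweep over the model that simultaneously outputs a neocolonization $\V^{*}$ and an attack sequence of length $\Weight(\V^{*})$, then sandwich using the known chain $\mmEternal\leq\mEternal\leq\NeoCol$. The upper-bound direction and the reduction to connected $G$ are fine. The lower-bound argument, however, has a genuine gap that is not a matter of missing bookkeeping.

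From ``$g_j$ is at $u_j$ after turn $j$'' and ``$d(u_j,u_{j+1})\geq 2$'' you correctly deduce $g_j\neq g_{j+1}$. But that is \emph{all} this step yields: nothing you have written rules out $g_j=g_{j+2}$, since by turn $j+2$ the guard $g_j$ has had two moves and may well sit on $u_{j+2}$ if $d(u_j,u_{j+2})\leq 2$. With only the consecutive constraint, two guards alternating $a,b,a,b,\dots$ satisfy every condition you state, so the argument as written gives the lower bound $2$, not $m$. The ``thin-point / looping'' clause does not close this gap: inside a single spine block there are no thin points to cross, yet such a block contributes $c_i+1$ witnesses, and you offer no reason why those require $c_i+1$ pairwise distinct guards. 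The cyclic attack and eventual periodicity are then a red herring.

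What is actually needed---and what the paper proves---is the \emph{linearly growing} distance condition $d(u_j,u_{j+\ell})\geq \ell+1$ for every $\ell\geq 1$. With this, a single finite pass $u_1,\dots,u_m$ already forces $m$ distinct guards, because the guard at $u_j$ after turn $j$ traverses at most $\ell$ edges by turn $j+\ell$ and hence cannot reach $u_{j+\ell}$. The paper secures this by a particular greedy: $A_{i+1}$ has the minimum right endpoint among intervals starting strictly to the right of $t(D_i)$, where $D_i$ is whichever of $A_i$ and the farthest-reaching neighbour $B_i$ of $D_{i-1}$ extends further right; the inductive claim $s(I(v))<t(D_{i+j-1})$ for every $v$ within distance $j$ of $v(A_i)$ is exactly what makes the distances grow. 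Your ``one pendant-like witness per gap'' does not by itself guarantee this growth, and the only property you assert---distance $\geq 2$ between \emph{consecutive} witnesses---is strictly weaker. If you want to rescue your route, you must state and prove the growing-distance property for your witnesses; once you do, the cyclic attack, the periodicity argument, and the thin-point accounting can all be discarded.
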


\begin{proof}
 Without loss of generality, $G$ has an interval model $\I$ whose intervals all have different endpoints (Figure~\ref{fig}).  Define $A_1, \ldots, A_k$ and $D_1, \ldots, D_k$ as the maximal sequences such that $A_1 = D_1$ is the interval of $\I$ with minimum ending point, and:
 \begin{enumerate}[label=(\alph*)]
  \item $B_{i+1}$ is the interval with ending point $\max\{t(I) \mid I \in \I \text{ and } s(I) < t(D_i)\}$,\label{B}
  \item $A_{i+1}$ is the interval with ending point $\min\{t(I) \mid I \in \I \text{ and } s(I) > t(D_i)\}$, and\label{A}
  \item $D_{i+1} = A_{i+1}$ if $t(A_{i+1}) > t(B_{i+1})$ and $D_{i+1} = B_{i+1}$ otherwise.\label{D}
 \end{enumerate}
 Applying \ref{B}--\ref{D} by induction, it is easy to see that $s(I(v)) < t(D_{i+j-1})$ when $v \in V(G)$ is at distance at most $j$ from $v(A_i)$, for $i < i+j \leq k$.  Therefore, by~\ref{A}, $v(A_{i+j})$ is at distance at least $j+1$ from $v(A_i)$.  Thus, the guards that occupy $v(A_i)$ at turn $i$ cannot occupy $v(A_{i+j})$ at turn $i+j$.  Hence, at least $k$ guards are required to repel the first $k$ attacks when $v(A_i)$ is attacked at turn $i$.  That is, $\mmEternal(G) \geq k$.
 
 We now prove that $k \geq \NeoCol(G)$.  For the sake of notation, let $A_0 = D_0$ and $A_{k+1} = D_{k+1}$ be intervals outside $\I$ with $t(A_0) < \min\{s(I) \mid I \in \I\}$ and $s(A_{k+1}) > \max\{t(I) \mid I \in \I\}$.  Thus, $p_0 = 0$, $p_1 = 1$, and $p_h = k+1$ for the indices $p_0 < \ldots < p_h$ such that $D_{p_i} = A_{p_i}$.  Fix $1 \leq i < h$, and let $p = p_i$, $q = p_{i+1}$, and $\I_i = \{I \in \I \mid t(D_{p-1}) < s(I) < t(D_{q-1})\}$ (Figure~\ref{fig}).  If $q = p+1$, then, by~\ref{A}, $t(A_p) = t(D_{q-1})$ is the lowest ending point in $\I_i$, thus $v(\I_i)$ is a clique (of weight $p_{i+1}-p_i$).  Otherwise, by \ref{B}~and~\ref{D}, $D = D_{p+1} \cup \ldots \cup D_{q-1}$ is the interval $(s(D_{p+1}), t(D_{q-1}))$.  Moreover, by~\ref{A}, $t(A_p) = t(D_p) > s(D)$ is the lowest ending point in $\I_i$, thus every interval in $\I_i$ has least one endpoint inside $D$.  In other words, $v(D_{p+1}), \ldots, v(D_{q-1})$ is a connected dominating set of $G[v(\I_i)]$ and, therefore, $\Weight(v(\I_i)) \leq q-p = p_{i+1} - p_i$.  Summing up, $\V = \{v(\I_1), \ldots, v(\I_{h-1})\}$ is a neocolonization of $G$ with weight 
 \begin{displaymath}
  \Weight(\V) = \sum_{i=1}^{h-1} \Weight(v(\I_i)) \leq \sum_{i=1}^{h-1}(p_{i+1}-p_i) = p_h - p_1 = k.
 \end{displaymath}
 Hence, $\NeoCol(G) \leq k \leq \mmEternal(G) \leq \mEternal(G) \leq \NeoCol(G)$ as desired.
\end{proof}

We remark that if $\I$ is proper, then $A_i = D_i$ for $1 \leq i \leq k$.  Hence, $A_1, \ldots, A_k$ is an independent set of $G$ and $v(\I_1), \ldots, v(\I_k)$ is a clique cover of $G$. So, $\Cover(G) \leq \mmEternal(G) \leq \Ind(G) \leq \Cover(G)$.  That is, the above proof implies the result by Braga et al.  In the general case, $\V = \{v(\I_1), \ldots, v(\I_{h-1})\}$ is a neocolonization of minimum weight.  Moreover, $v(\I_i)$ can be eternally defended if one guard is initially placed at $v(D_j)$ for $p \leq j < q$, because $v(D_{p+1}), \ldots, v(D_{q-1})$ is a connected dominating set of $G[v(\I_i)]$ when $v(\I_i)$ is not a clique.  Therefore, $D = \{v(D_i) \mid 1 \leq i \leq k\}$ is an $(n,y)$-eternal dominating set of $G$ of minimum size, for $y \in \{1,n\}$.  

To compute $\V$ and $D$ it suffices to find the maximal sequences $A_1, \ldots, A_k$ and $D_1, \ldots, D_k$ of\/ $\I$ satisfying \ref{B}--\ref{D}.  Suppose $D_i$ was found at step $i$, $0 \leq i \leq k$, where $t(D_0) < \min\{t(I) \mid I \in \I\}$.  For step $i+1$, the ending points after $t(D_i)$ are traversed to find the interval $A_{i+1}$ satisfying~\ref{A}.  If $A_{i+1}$ does not exist, then the algorithm ends as $k=i$.  Otherwise, the beginning points in $D_i$ are traversed to find the interval $B_{i+1}$ satisfying~\ref{B}; $D_{i+1}$ is then computed according to~\ref{D}.  As every endpoint of $\I$ is traversed $O(1)$ times, the algorithm costs $O(n)$ time when the endpoints are integers in $(0,2n]$ and $t(D_0) = 0$.  Such an interval model $\I$ can be computed in linear time from $G$ (e.g.~\cite{BoothLuekerJCSS1976}), thus $\V$ and $D$ can be found in linear time when the input is either $G$ or $\I$.  

\small

\end{document}